\newtheorem{remark}{Remark}
\newtheorem{prop}{Proposition}
\newtheorem{corollary}{Corollary}
\newtheorem{lemma}{Lemma}
\pgfplotsset{compat=1.18} 
\newcommand{\E}[1]{\mathrm{E}[ #1 ]}
\newcommand{\bE}[1]{\mathrm{E}\left[ #1 \right]}
\newcommand{\PP}[1]{\mathrm{P}\{ #1 \}}
\newcommand{\V}[1]{\mathrm{V}[ #1 ]}
\definecolor{greenwater}{rgb}{0.35,0.85,0.78}
\newcommand{\bu}{\mathbf{u}}   
\newcommand{\br}{\mathbf{r}}   
\newcommand{\Cr}{c}            
\newcommand{\sn}{n}            
\newcommand{\vf}{v}            
\newcommand{\Tf}{T}            
\begin{document}

    
\title{Size-Aware Dispatching to Fluid Queues}

\numberofauthors{3}
\author{
\alignauthor Runhan Xie\\
       \affaddr{University of California, Berkeley} \\
       \email{runhan\_xie@berkeley.edu}
\and
\alignauthor Esa Hyytiä \\
        \affaddr{University of Iceland} \\
        \email{esa@hi.is}
\and
\alignauthor Rhonda Righter \\
       \affaddr{University of California, Berkeley} \\
       \email{rrighter@berkeley.edu}
}


\maketitle

\begin{abstract}
    We develop a fluid-flow model for routing problems, where fluid consists of different size particles and the task is to route the incoming fluid to $\sn$ parallel servers using the size information in order to minimize the mean latency.

    The problem corresponds to the dispatching problem of (discrete) jobs arriving according to a stochastic process. In the fluid model the problem
    reduces to finding an optimal path to empty the system
    in $\sn$-dimensional space. We use 
    the calculus of variation to characterize the structure of optimal policies.
    Numerical examples shed further light on the fluid routing problem and the optimal control of large distributed service systems.
\end{abstract}

\section{Introduction}

The problem of routing jobs to parallel FCFS (first-come first-served) servers based on job sizes and server workloads in order to minimize, e.g.,
mean latency is important for many applications \cite{winston-applied-1977,johri-ejor-1989,ephremides-tac-1980}.
Though SRPT scheduling for each server is optimal in this context, FCFS is the dominant scheduling policy in practice; see, e.g., \cite{Madni}, \cite{Tirmazi}.

Even under Markov assumptions with homogeneous FCFS servers, there is no simple characterization
of the optimal dispatching policy when both the size of an arriving job and the workloads at all 
servers are observed \cite{hyytia-queueing-2022}.
We therefore consider stable dispatching systems at the fluid limit, so instead of individual jobs arriving according to a point process in time,
we model the incoming jobs as a continuum, while 
keeping the notion of job sizes as a control parameter of the fluid flow process.

The value function, or total latency, of the fluid system approximates the value function of the
underlying stochastic system, but is more computationally
tractable for determining the optimal policy and evaluating heuristics. 

The main contributions of this paper are as follows:
\begin{enumerate}
\item We formulate the fluid control problem that captures the dynamics of the original dispatching system
when the number of jobs in the system is large. 
To our knowledge, we are the first to study size-aware flow control for fluid models of queueing systems.
\item 
We show that shorter jobs should be routed to shorter queues, which reduces the problem to an optimal path problem, and 
 that the optimal path is invariant under appropriate scaling, which reduces the number of dimensions required to determine the path. We also observe
 that unbalancing the workloads is beneficial, especially 
under heavy load.
\item For some paths, and for the optimal path in heavy traffic, the fluid value function matches the value function for the original stochastic model.
\item We show, for two servers, that latency decreases as job sizes become more variable in convex ordering.
\item We develop new heuristics for the original model based on the optimal fluid policy when workloads are large, but are adjusted to avoid idling for small workloads.
\end{enumerate}

The fluid control problem 
reduces to the problem of determining an optimal path to empty the system.
We give integral expressions for costs of arbitrary paths. Fixed control actions correspond to
straight line segments, for which closed-form expressions are available.

\section{Related Work}
The dispatching problem 
has been extensively studied in the queueing theory literature, and
many policies have been proposed and shown to be optimal under different assumptions on the available information (server states and job sizes). 
If the system is homogeneous and state-aware only, then 
Join-the-Shortest-Queue (JSQ) \cite{weber1978optimal,winston1977optimality} or Least-Work-Left (LWL) \cite{daley1987certain,foss1980approximation,koole1992optimality,akgun2013partial} have been shown to be optimal. 
If the system is size-aware only, then Size-Interval-Task-Assignment (SITA) \cite{crovella-sigmetrics-1998} is shown to be optimal \cite{feng-peva-2005}.
With the routing history as the only state information, Round-Robin (RR) is optimal, and combining RR and SITA can out-perform RR or SITA alone \cite{ephremides1980simple,liu1994optimality,liu1998optimal}. 

The problem is much harder
when the dispatching policy is both size- \emph{and} state-aware. 
Sequential dispatching heuristics, that use both state and job-size information, and that route short jobs to short queues, were introduced in \cite{hyytia-itnac-2022},
and a particular one, called \textsc{Dice}, was shown numerically to have excellent performance. 
Another dispatching heuristic, CARD, 
was proposed in \cite{xie-2024} and proven to be delay optimal in heavy traffic, and was shown to perform well in simulations. Our work provides theoretical support for heuristics like CARD and \textsc{Dice} that unbalance the loads and send shorter jobs to shorter queues.

Fluid models 
have been applied to study staffing problems e.g. \cite{harrison2005method, whitt2006fluid, whitt2006staffing}, scheduling problems e.g. \cite{whitt2006fluid, atar2010cmu, chen2021scheduling, zychlinski2023managing}, and dispatching problems e.g. \cite{talreja2008fluid, perry2009responding, stanojevic2010distributed, huang2019optimal, hou2020real}. These models and their controls are often state aware, i.e., they use remaining fluid in the system for decision making, but are not job-size aware.
We take a first step in studying size-aware fluid dispatching problems.

\section{Fluid Dispatching Model}
\label{sect:model}

Each of $n$ FCFS servers has service rate $1/n$, and
work comes in as a fluid flow at rate $\lambda < 1$ comprising different
size particles with known density $f(x)$, CDF $F(x)$, and CCDF $\bar{F}(x)=1-F(x)$,
and with mean $\E{X}=1$, so that $\rho=\lambda\, \E{X} =\lambda < 1$. 
The incoming flow is to be split among servers as a function of job sizes (or size of particles).

The state information, assumed known,
is 
denoted by $\bu=(u_1,\ldots,u_\sn)$, where $u_i$ defines the backlog, or remaining workload, in queue $i$.
Thus, with no arrivals, queue $i$ would empty at time $u_i \sn$. 
A control policy (job dispatcher) $\pi$ splits the incoming flow of jobs $\lambda$ into $\sn$ sub-flows, based on job sizes and the current backlog. That is, the policy chooses arbitrary
subsets of job sizes from non-negative reals,
$A_{i}(\bu)$, 
and routes jobs of size $x\in A_i(\bu)$ to server $i$. 
We will show that for an optimal policy, each subset comprises a single non-overlapping interval, and moreover, shorter jobs Should be routed to servers with smaller backlog. 
%

The instantaneous rate of jobs, $\lambda_i(\bu)$, and the work flow, $\rho_i(\bu)$, routed to server $i$ is then 
$$
\lambda _{i}(\bu):=\lambda \int_{A_i (\bu)}f(x)\,dx \text{;}\,\,\rho_{i}(\bu):= \lambda \int_{A_i (\bu)}xf(x)\,dx
$$
where $\sum \lambda_i(\bu)=\lambda =\rho =\sum \rho_i (\bu)$. For ease of notation, 
we often omit ``$(\bu)$'' and 
write $\lambda_i$ and $\rho_i$, but 
we remind the reader that these rates are constantly changing according to the control policy.

Given our objective of minimizing latency, each arriving job assigned to server $i$ incurs cost $u_i \sn$, so
server $i$ incurs waiting time cost at rate
$\lambda_i u_i \sn$, 
and the total cost rate 
is
\begin{equation}\label{eq:cost-rate-Fu}
\Cr(\bu) := n\sum_i \lambda_i u_i.
\end{equation}
For all meaningful policies
$\bu = 0$ is an absorbing state where no further costs are incurred, but there may be others.
Let $\sn_0$ denote the number of empty servers, 
$
\sn_0 := ||\{ i : u_i=0\}|| = \sum_i \mathbf{1}(u_i=0).$
Whenever $\sn_0 \cdot (1/n) \ge  \rho$, the empty servers
can process all incoming jobs without increasing their backlogs, incurring no further costs, so we call any state with $\sn_0 \ge  n\rho$ an absorbing state. We can assume, without loss of generality, that the system stops (is absorbed) in such a state. The question is how to move from the current state $\bu$
to an absorbing state at minimal cost. It is easy to show that before absorption, incoming fluid should be routed to servers with zero fluid at rate $\rho_i=1/n$, and all servers should work at rate $1/n$.

When fewer than $\sn^*$ servers are empty, we have
\begin{equation*}
  \label{eq:u-dot}
  \dot{u}_i = \rho_i - \frac{1}{\sn} =: - \nu_i,
\end{equation*}
where $\nu_i$ is queue $i$'s instantaneous drainage rate, 
with $\rho_i=1/n$ if $u_i=0$.
The cost (value function) of 
policy $\pi = (A_1(\bu),...,A_n(\bu))$, with $\tau_{\pi}(\bu)$ denoting the time to absorption, is
\begin{equation*}\label{eq:costs-vf}
v_{\pi}(\bu) := \int _0^{\infty}c(\bu_{\pi}(t))\,dt
= \int _0^{\tau_{\pi}(\bu)} c(\bu_{\pi}(t))\,dt \forall \bu.
\end{equation*}

\begin{remark}[Random Split]
\label{rmk:rnd}

With load-balancing random split (RND),
the total cost incurred is 
$$
C_i
=\int_0^{\frac{\sn}{1-\rho} u_i} c_i(t)\,dt
= 
n\int_0^{\frac{\sn}{1-\rho} u_i} \frac{1-\rho}{\sn}t\,dt
= \frac{\rho nu_i^2}{2(1-\rho)},
$$
\begin{equation*}\label{eqn:rnd}
v_{\text{RND}}(\bu) = \sum_{i=1}^n C_i=\frac{\rho \sn}{2(1-\rho)} \left( u_1^2 + \ldots + u_{\sn}^2\right),
\end{equation*}
which, for $\rho<1$ is finite for any job size distribution, regardless of $\E{X^2}$.
Moreover, 
this matches the value function of the stochastic system; for
each M/G/1-FCFS queue,
$v(u)=\rho u^2/(2(1-\rho))$ \cite{hyytia-ejor-2012,hyytia-peva-2014}.
\end{remark}

\section{Analysis 
with \lowercase{n} Servers}
\label{sect:general-case}

We first show that the optimal policy is a dynamic SITA-type policy, which, for a given $\bu$, 
splits job sizes into $\sn$ non-overlapping intervals, 
sending shorter jobs to servers with less work. That is, with 
$u_1 \ge u_2 \ge \ldots \ge u_{\sn}$, the optimal policy is determined by size thresholds $0=h_0 (\bu )\le\cdots\le h_{\sn} ( \bu )=\infty$, 
so that $A_{\sn+1-i}(\bu) = [h_{i-1}(\bu),h_i(\bu))$. 

\begin{prop}\label{prop:short-to-short}
  The optimal policy assigns the shortest jobs to the server with the smallest
  backlog, the next size interval to the server with the second smallest backlog, and so on.
\end{prop}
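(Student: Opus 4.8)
The plan is to argue by an interchange/exchange argument at the level of the fluid control problem. Fix a state $\bu$ with $u_1 \ge u_2 \ge \cdots \ge u_\sn$, and suppose a policy $\pi$ assigns to server $i$ (with backlog $u_i$) and to server $j$ (with backlog $u_j$) size sets $A_i,A_j$ such that $u_i > u_j$ but there exist sizes $x \in A_i$, $y \in A_j$ with $x < y$. I want to show that swapping an infinitesimal amount of flow — moving a sliver of size-$x$ fluid from server $i$ to server $j$ and a sliver of size-$y$ fluid from server $j$ to server $i$, chosen so that the instantaneous drainage rates $\nu_i,\nu_j$ (equivalently the $\rho_i,\rho_j$) are unchanged — strictly decreases the cost, or at least does not increase it, so that an optimal policy can be taken to have the monotone interval structure. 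The key observation is that the instantaneous cost rate $\Cr(\bu) = \sn\sum_i \lambda_i u_i$ depends on the split only through the job \emph{counts} $\lambda_i$, while the dynamics $\dot u_i = \rho_i - 1/\sn$ depend on it only through the \emph{work} $\rho_i$. So I will set up the swap to preserve all the $\rho_k$'s (hence the entire future trajectory of $\bu$ and the absorption time), and then track the effect on $\sum_i \lambda_i u_i$ integrated over time.

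The concrete steps are: (i) Parametrize the swap. Take a small work amount $\delta$: remove from $A_i$ a set of size-$x$ particles carrying work $\delta$ (hence job-count $\delta/x$) and add it to $A_j$; remove from $A_j$ a set of size-$y$ particles carrying work $\delta$ (job-count $\delta/y$) and add it to $A_i$. Then every $\rho_k$ is unchanged, but $\lambda_i$ changes by $\delta/y - \delta/x < 0$ and $\lambda_j$ changes by $\delta/x - \delta/y > 0$; call this common magnitude $\epsilon := \delta(1/x - 1/y) > 0$. (ii) Compute the change in instantaneous cost rate: $\Delta\Cr = \sn(-\epsilon u_i + \epsilon u_j) = -\sn\,\epsilon\,(u_i - u_j) < 0$ since $u_i > u_j$. (iii) Argue this sign is preserved along the trajectory: because the swap leaves all $\rho_k$ unchanged wherever both servers are non-empty, the trajectory $\bu_\pi(t)$ is (to first order in $\delta$) unchanged, the ordering $u_i(t) \ge u_j(t)$ persists, and the absorption time is unchanged; integrating, $v_{\tilde\pi}(\bu) - v_\pi(\bu) < 0$. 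Hence any policy with a "size inversion" between two servers can be strictly improved, so the optimal policy routes shorter jobs to shorter queues; iterating the pairwise argument across all pairs $i<j$ gives the full monotone threshold structure $A_{\sn+1-i}(\bu) = [h_{i-1}(\bu),h_i(\bu))$.

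The main obstacle — and where care is needed — is handling the boundary cases and making the "trajectory unchanged" claim rigorous. First, one of the queues involved in the swap may hit zero (or already be at zero, where the constraint $\rho_k = 1/\sn$ is forced); near such times the swap as described may be infeasible and the comparison of the two policies becomes a genuine perturbation-of-the-state argument rather than an exact-matching one. I would deal with this by doing the swap only on the time interval before any relevant queue empties and checking that the post-emptying continuation can be matched or only improved, invoking the already-stated fact that before absorption all servers drain at rate $1/\sn$ and empty servers absorb incoming fluid at rate $1/\sn$. Second, I should confirm the swap is feasible in the measure-theoretic sense — that $A_i$ actually contains a positive-measure set of sizes near $x$ and $A_j$ near $y$ — which holds precisely when there is an inversion on a set of positive measure; an inversion on a null set does not affect $\lambda_i,\rho_i$ and can be ignored. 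Finally, to get the clean interval structure (not merely "monotone") one uses that for fixed $(\lambda_i)$ and $(\rho_i)$ the instantaneous cost rate is already pinned down, so among all splits realizing a given drainage-rate vector the cost rate is the same; the interval form is then the canonical representative, and optimality over drainage-rate trajectories is what the subsequent optimal-path analysis addresses.
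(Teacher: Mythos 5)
Your proposal is correct and is essentially the paper's own argument: the paper likewise swaps two equal-load slivers of small and large jobs between servers $i$ and $j$ (so all $\rho_k$, and hence the dynamics, are unchanged) and observes that the small-job sliver carries the larger job count, so the instantaneous cost rate $\sn\sum_k\lambda_k u_k$ strictly decreases when the small jobs go to the smaller backlog. Your additional care about trajectory invariance, boundary/emptying times, and positive-measure inversions goes beyond what the paper writes down, but it is the same interchange argument.
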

\begin{proof}
We prove this by contradiction. Suppose $u_i > u_j$ and the optimal policy routes jobs of size $(x, x+\delta)$ to server $i$ and jobs of size $(y, y+\delta')$ to server $j$, where $x+\delta < y$. We choose $\delta$ and $\delta'$ so that 
\[
    \E{X\cdot\mathcal{I}(x\leq X\leq x+\delta)} = \E{X\cdot\mathcal{I}(y\leq X\leq y+\delta')},
\]
i.e., the loads contributed by the two streams of jobs are equal. Let $\lambda_x=\lambda P(x\leq X\leq x+\delta)$ and $\lambda_y=\lambda P(y\leq X\leq y+\delta')$, so $\eta := \lambda_x - \lambda_y>0$. 
The cost rate due to these subflows of jobs is 
\[
    c_{ij} = n(\lambda_xu_i + \lambda_yu_j).
\]
Interchanging the corresponding job flows reduces the cost rate:
\[
    c_{ji} = n(\lambda_xu_i + \lambda_yu_j) = n(\lambda_yu_i + \lambda_x u_j +\eta (u_i - u_j)) < c_{ij},
\]
which contradicts the optimality of the proposed policy.
\end{proof}
\begin{corollary}[Structure of optimal policies] {}
The optimal dispatching policy can be characterized by $\sn-1$
appropriately chosen {\em state-dependent} thresholds $h_i(\bu)$.
Therefore, the optimal policy from any initial point $\bu$ is determined by a path from that point to an absorbing state.
\end{corollary}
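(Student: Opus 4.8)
The first assertion is immediate from Proposition~\ref{prop:short-to-short}. Relabel the servers so that $u_1 \ge u_2 \ge \cdots \ge u_{\sn}$; by symmetry of the policy in the servers this is without loss of generality, and backlog ties may be broken arbitrarily since doing so does not change the cost rate $c(\bu)$. Proposition~\ref{prop:short-to-short} says that, in an optimal policy, the job-size sets assigned to the servers are ordered consistently with the backlog order, which forces each $A_i(\bu)$ to be an interval: $A_{\sn+1-i}(\bu) = [h_{i-1}(\bu), h_i(\bu))$ for some nondecreasing sequence $0 = h_0(\bu) \le h_1(\bu) \le \cdots \le h_{\sn}(\bu) = \infty$. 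The two endpoints are fixed, so the policy at $\bu$ is pinned down by the $\sn-1$ free thresholds $h_1(\bu), \ldots, h_{\sn-1}(\bu)$.

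For the second assertion I would exhibit a cost-preserving correspondence between threshold policies and paths. In the forward direction, a stationary threshold policy $\pi$ assigns to each state $\bu$ a threshold vector, hence the work split $\rho_i(\bu) = \lambda \int_{A_i(\bu)} x f(x)\,dx$, hence the drift field $\dot{\bu} = (\rho_1 - 1/\sn, \ldots, \rho_{\sn} - 1/\sn)$ (with $\rho_i = 1/\sn$ whenever $u_i = 0$); the integral curve of this field starting at $\bu$ is a path $\bu_\pi(\cdot)$ ending in an absorbing state, and $v_\pi(\bu) = \int_0^{\tau_\pi(\bu)} c(\bu_\pi(t))\,dt$ is its cost. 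In the converse direction, given any absolutely continuous path $\br(\cdot)$ from $\bu$ to the absorbing set, the total workload $\sum_i u_i$ along it decreases at the constant rate $1-\rho$, so the geometry of the curve already determines the timing; at almost every time the tangent yields drainage rates $\nu_i$, hence a split $(\rho_1, \ldots, \rho_{\sn})$ with $\rho_i = 1/\sn - \nu_i$, and inverting the monotone short-to-short map from threshold vectors to split vectors recovers a threshold vector $h(\br(t))$. The resulting policy has trajectory $\br$ and the same cost, so $\inf_\pi v_\pi(\bu)$ equals the minimal path cost over all paths from $\bu$ to an absorbing state, which is exactly the sense in which the optimal policy is ``determined by a path''.

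The step needing care is the inversion in the converse direction. Two issues arise. First, realizability: in the sorted region the achievable split vectors form a convex set, so one must check---or, following the calculus-of-variations treatment to come, build in as an admissibility constraint---that the path's tangent stays in the corresponding cone of feasible directions. Second, the map $h \mapsto (\rho_1, \ldots, \rho_{\sn})$ is a bijection onto that set only when $F$ is continuous and strictly increasing on its support; atoms or support gaps make the thresholds non-unique, but the monotone rearrangement still determines the split vector, and hence the cost, uniquely, so the correspondence is unaffected at the level of values. I expect this bookkeeping to be the only genuine obstacle; the rest is a restatement of Proposition~\ref{prop:short-to-short} together with the dynamics already recorded in Section~\ref{sect:model}.
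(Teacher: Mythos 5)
Your proposal is correct and follows the same route the paper implicitly takes: the interval/threshold structure is read off directly from Proposition~\ref{prop:short-to-short}, and the policy--path correspondence is exactly the one the paper asserts at the start of its ``value function in terms of paths'' subsection (tangent direction $\leftrightarrow$ split vector $\leftrightarrow$ thresholds). The realizability and non-uniqueness caveats you flag are genuine but harmless, as you note, and the paper simply elides them.
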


\subsection{The Value Function in Terms of Paths}

Note that there is a one-to-one correspondence between a policy $\pi$ starting in some state $\bu$ and a path $\br(s)$ from $\bu$ to $(0,\ldots,0)$, 
where $s$ is the curve parameter.
By time reversal, we can also consider paths from the origin to $\bu$.
We shall consider only paths $\br(s)= (r_1(s), \ldots, r_{\sn}(s))$ that are 
admissible by an appropriately chosen
control policy.
Then $\br(0)=(0,\ldots,0)$, 
$\br(s_1) = \bu$, and
$\br'(s)$ defines the direction of the path at point $s$,
which must coincide with the direction of the drainage rates
$\boldsymbol{\nu} = (\nu_1,...,\nu_n)$, $\boldsymbol{\nu}(s) = A(s)\, \br'(s),$
where the scalar function $A(s)$ adjusts the ``speed'' at curve point $s$ to
the actual drift. Before the path reaches an absorbing state (while $n_0<n^*$),
the total drift is $\nu_1 + \ldots \nu_{\sn} = 1 - \rho$,
giving
$A(s)
= (1-\rho)/(r_1'(s) + \ldots + r_{\sn}'(s)),$
where $r_1'(s) + \ldots + r_{\sn}'(s) > 0$ as long as $\rho<1$. 
Once the path reaches an absorbing state, $r_i'(s)=0$ if $u_i =0$, and $r_i'(s)=1/n$ if $u_i >0$.
Then, 
\begin{equation*}\label{eq:nu-s}
\boldsymbol{\nu}(s) = \frac{1-\rho}{r_1'(s)+\ldots + r_{\sn}'(s)} \,
\begin{pmatrix}
  r_1'(s), & \ldots & ,r_{\sn}'(s)
\end{pmatrix}.
\end{equation*}
Thus, the (Euclidean) length of the drainage vector $||\boldsymbol{\nu}(s)||$ along the path $\br(s)$ is
\begin{equation*}\label{eq:nu-s2}
||\boldsymbol{\nu}(s)||
= \frac{1-\rho}{
  r_1'(s)+\ldots + r_{\sn}'(s)} || \br'(s) ||.
\end{equation*}
Note that $\nu_i(s)<0$ if server $i$ is receiving more work than it can process (at the given time).

\begin{lemma}[Value function]
The total accumulated cost
along a work-conserving path $\br(s)$
is
\begin{equation}
\label{eq:total-rs}
  \vf_{\br}(\bu)
  = \frac{1}{1-\rho} \int_0^{s_1} \Cr(s) (r_1'(s) + \ldots + r_{\sn}'(s))\,ds,
\end{equation}
where $\Cr(s)$ is the cost rate (per unit time) that depends on the respective control actions according to
\eqref{eq:cost-rate-Fu}.
\end{lemma}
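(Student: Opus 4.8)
The plan is to obtain \eqref{eq:total-rs} directly from the definition $\vf_{\br}(\bu)=\int_0^{\tau_\pi(\bu)}\Cr(\bu_\pi(t))\,dt$, where $\pi$ is the policy corresponding to $\br$, by changing the integration variable from the physical time $t$ to the curve parameter $s$. Using the one-to-one correspondence between the draining policy and the path, write $\bu_\pi(t)=\br(\sigma(t))$ for a reparametrization $\sigma$ with $\sigma(0)=s_1$ (since $\br(s_1)=\bu$) and $\sigma(\tau_\pi(\bu))=0$ (since $\br(0)=(0,\ldots,0)$); the orientation is reversed because $\br$ runs from the origin out to $\bu$ while the policy drains $\bu$ back to the origin.

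Differentiating the identity $u_{\pi,i}(t)=r_i(\sigma(t))$ gives $\dot u_{\pi,i}(t)=r_i'(\sigma(t))\,\dot\sigma(t)=-\nu_i(\sigma(t))$. Summing over $i$ and using work-conservation, i.e. $\nu_1+\cdots+\nu_\sn=1-\rho$ before absorption, yields
\[
  \dot\sigma(t)\,\bigl(r_1'(\sigma(t))+\cdots+r_\sn'(\sigma(t))\bigr)=-(1-\rho),
\]
so under the substitution $s=\sigma(t)$ we have $dt=-\dfrac{r_1'(s)+\cdots+r_\sn'(s)}{1-\rho}\,ds$. Since $r_1'(s)+\cdots+r_\sn'(s)>0$ for $\rho<1$, as already noted, $\sigma$ is strictly decreasing and invertible, so the change of variables is legitimate. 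Carrying it out, with $\Cr(s)$ the cost rate $\Cr(\br(s))$ under the control action associated with the direction $\br'(s)$,
\[
  \vf_{\br}(\bu)=\int_0^{\tau_\pi(\bu)}\!\Cr(\br(\sigma(t)))\,dt
  =\int_{s_1}^{0}\!\Cr(s)\Bigl(-\tfrac{r_1'(s)+\cdots+r_\sn'(s)}{1-\rho}\Bigr)ds
  =\frac{1}{1-\rho}\int_0^{s_1}\!\Cr(s)\bigl(r_1'(s)+\cdots+r_\sn'(s)\bigr)\,ds,
\]
which is \eqref{eq:total-rs}.

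The substitution itself is routine; the only points needing care are at the boundary. One should confirm that the work-conserving identity $\sum_i\nu_i=1-\rho$ holds along the entire stretch of the path that accrues cost — namely up to the first hitting time of an absorbing state — and that no further cost is incurred afterwards, which holds because $\Cr\equiv 0$ once $\sn_0\ge \sn\rho$. If the path grazes a face $u_i=0$ without being absorbed, one also wants $r_i'(s)=0$ there so that the constraint $\dot u_i\ge 0$ is respected; this is built into the admissible work-conserving paths, so no separate argument is required. No calculus-of-variations input is needed for this lemma — it is purely the time-to-arclength reparametrization, and I expect that bookkeeping at the absorbing face to be the only mild obstacle.
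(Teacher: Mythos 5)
Your proof is correct and follows essentially the same route as the paper's: both are the time-to-curve-parameter change of variables, with the paper writing $dt = \lVert\br'(s)\rVert\,ds/\lVert\boldsymbol{\nu}(s)\rVert$ and then substituting $\lVert\boldsymbol{\nu}(s)\rVert = \frac{1-\rho}{r_1'(s)+\cdots+r_{\sn}'(s)}\lVert\br'(s)\rVert$, while you make the reparametrization $\sigma$ explicit and derive $dt$ from $\sum_i\nu_i=1-\rho$ directly. Your boundary bookkeeping at the absorbing faces is a welcome extra care the paper glosses over.
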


As $\br'(s)$ defines 
$\lambda_i$ and 
$\rho_i$, the cost rate
$\Cr(s)$ depends solely on $\br(s)$ and $\br'(s)$,
and therefore
\eqref{eq:total-rs} can be evaluated (at least numerically if not in closed form) for
any given path $\br(s)$.

\subsection{Scale-free paths}

Let us consider a family of ``scale-free'' paths that are obtained by {\em scaling} a reference path $\br_0(s)$,
\begin{equation*}\label{eq:scaling}
    \br(s) = \alpha \br_0(s),
\end{equation*}
where $\br_0(s)$ is some fixed path to $\bu=\br_0(s_1)$ and $\alpha>0$ is a free scaling parameter. 
Substituting $\br(s)$ into \eqref{eq:total-rs}
reveals a quadratic relationship between the 
two value functions:
\begin{lemma}\label{lem:vf-scaling}
The value function of 
$\br(s)=\alpha \br_0(s)$ is
\begin{equation*}\label{eq:vf-scaling}
    \vf_{\br}(\alpha\bu) =
    \alpha^2 \vf_{\br_0}(\bu).
    \end{equation*}
\end{lemma}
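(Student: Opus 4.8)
The plan is to substitute $\br(s)=\alpha\br_0(s)$ into the path representation \eqref{eq:total-rs} and count how the scaling propagates through each factor. First I would parameterize both paths by the same curve parameter $s\in[0,s_1]$, so that $\br(s)=\alpha\br_0(s)$ throughout, with $\br(0)=(0,\ldots,0)$ and $\br(s_1)=\alpha\br_0(s_1)=\alpha\bu$. Differentiating coordinatewise gives $r_i'(s)=\alpha\,r_{0,i}'(s)$, hence $r_1'(s)+\cdots+r_\sn'(s)=\alpha\,(r_{0,1}'(s)+\cdots+r_{0,\sn}'(s))$; this is the first of the two factors of $\alpha$ in \eqref{eq:total-rs}.

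For the second factor I would show that, with the natural implementation of $\br$, the cost rate satisfies $\Cr_{\br}(s)=\alpha\,\Cr_{\br_0}(s)$. The key observation is that the drainage vector $\boldsymbol{\nu}(s)=\frac{1-\rho}{r_1'(s)+\cdots+r_\sn'(s)}(r_1'(s),\ldots,r_\sn'(s))$ is homogeneous of degree $0$ in $\br'(s)$, so $\br$ and $\br_0$ have identical $\boldsymbol{\nu}(s)$, hence identical per-server loads $\rho_i=\tfrac1\sn-\nu_i$, at every $s$. Implementing $\br$ with this same load split --- e.g.\ via the short-to-short thresholds of Proposition~\ref{prop:short-to-short}, which depend only on the $\rho_i$ and on the (scale-invariant) coordinate ordering of the state --- yields the same arrival rates $\lambda_i(s)$ as along $\br_0$. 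Since the state itself scales, $r_i(s)=\alpha\,r_{0,i}(s)$, we get $\Cr_{\br}(s)=\sn\sum_i\lambda_i(s)r_i(s)=\alpha\,\sn\sum_i\lambda_i(s)r_{0,i}(s)=\alpha\,\Cr_{\br_0}(s)$.

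Combining the two, the integrand of \eqref{eq:total-rs} for $\br$ equals $\alpha^2$ times that for $\br_0$ at every $s$, so $\vf_{\br}(\alpha\bu)=\alpha^2\,\vf_{\br_0}(\bu)$. The step needing the most care is the tail of the path near an absorbing state (and any boundary segment where some $u_i(s)=0$), where the drift follows the piecewise rule ($r_i'=0$ when $u_i=0$) rather than the generic work-conserving one: one must check the two paths are in the same regime at each $s$. This holds because $u_{0,i}(s)=0\iff\alpha\,u_{0,i}(s)=0$ for $\alpha>0$, so the set of empty servers is common to both paths; within each regime the homogeneity arguments carry over verbatim, and since \eqref{eq:total-rs} is invariant under reparameterization of the curve, any rescaling of the speed that such a segment may force does not affect either side of the claimed identity. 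I expect this bookkeeping to be the only real obstacle --- the heart of the argument is just the degree count $1+1=2$.
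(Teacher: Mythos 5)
Your proof is correct and takes essentially the same approach as the paper: both factors in the integrand of \eqref{eq:total-rs} --- the cost rate $\Cr(s)=n\sum_i\lambda_i r_i(s)$ and the derivative sum $r_1'(s)+\cdots+r_\sn'(s)$ --- each acquire one factor of $\alpha$, giving $\alpha^2$ overall. Your additional care (the degree-$0$ homogeneity of $\boldsymbol{\nu}$ in $\br'$, hence unchanged $\lambda_i$, and the regime check near absorbing states) elaborates on points the paper's terser proof leaves implicit, but does not change the argument.
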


Lemma~\ref{lem:vf-scaling}
gives the following two important structural properties of the optimal policy.
\begin{corollary}\label{cor:scale-free-optimal}
    Scale-free paths are optimal.
\end{corollary}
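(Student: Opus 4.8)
The plan is to make the statement precise as follows: fix any state $\bu$, let $\br_0^*$ be an optimal (minimum-cost) admissible path from the origin to $\bu$, and show that for every $\alpha>0$ the scaled path $\alpha\br_0^*$ is an optimal path from the origin to $\alpha\bu$. Equivalently, writing $\vf^*(\cdot):=\inf_\pi \vf_\pi(\cdot)$ for the optimal value, I want $\vf^*(\alpha\bu)=\alpha^2\vf^*(\bu)$ together with the fact that the optimizing path transforms by the same scaling; a direct consequence is that the field of optimal routing directions is invariant along rays through the origin, so the optimal policy need only be computed on a cross-section transverse to those rays, which lowers the effective dimension by one.

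First I would record that admissibility is preserved under the map $\br(s)\mapsto\alpha\br(s)$. Admissibility of a path is a pointwise condition: at curve point $s$ the direction $\br'(s)$ must be parallel to a feasible drainage vector $\boldsymbol\nu$, and the set of feasible $\boldsymbol\nu$ (those with $\nu_i\ge -1/\sn$ and $\nu_1+\ldots+\nu_{\sn}=1-\rho$ before absorption, together with the boundary rule that $\nu_i=0$ once $u_i=0$) depends on the current position only through the index set $\{i:u_i=0\}$. Scaling by $\alpha>0$ leaves that index set unchanged, leaves every direction unchanged, and maps absorbing states to absorbing states. Hence $\br$ is admissible from $\bu$ if and only if $\alpha\br$ is admissible from $\alpha\bu$. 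This bookkeeping — in particular handling the change of dynamics at the instant the path first reaches an absorbing state — is the step I expect to need the most care; everything else is a soft comparison argument.

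Given that, the corollary follows from Lemma~\ref{lem:vf-scaling} by a two-sided comparison of the two optimization problems. On one hand $\alpha\br_0^*$ is admissible from the origin to $\alpha\bu$, so $\vf^*(\alpha\bu)\le \vf_{\alpha\br_0^*}(\alpha\bu)=\alpha^2\vf_{\br_0^*}(\bu)=\alpha^2\vf^*(\bu)$. On the other hand, for any admissible path $\br$ from the origin to $\alpha\bu$, the path $\alpha^{-1}\br$ is admissible from the origin to $\bu$, so by Lemma~\ref{lem:vf-scaling} applied with scaling factor $\alpha^{-1}$ we get $\vf^*(\bu)\le \vf_{\alpha^{-1}\br}(\bu)=\alpha^{-2}\vf_{\br}(\alpha\bu)$; taking the infimum over $\br$ gives $\alpha^2\vf^*(\bu)\le \vf^*(\alpha\bu)$. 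Combining the two inequalities yields $\vf^*(\alpha\bu)=\alpha^2\vf^*(\bu)$, and since $\alpha\br_0^*$ attains this value it is an optimal path from $\alpha\bu$. Thus optimal paths from any two states lying on a common ray through the origin are scalings of one another, which is the asserted scale-freeness; if one prefers not to assume an optimizer exists, the same argument run entirely with infima still gives $\vf^*(\alpha\bu)=\alpha^2\vf^*(\bu)$ and the reduction in dimension.
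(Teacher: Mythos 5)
Your proof is correct and takes essentially the same route as the paper: both arguments reduce the claim to Lemma~\ref{lem:vf-scaling} via a comparison/exchange in the two scaling directions (the paper phrases it as a contradiction, you phrase it as a two-sided inequality on the optimal values). Your explicit check that admissibility is preserved under scaling is a detail the paper leaves implicit, and is a worthwhile addition.
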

\begin{proof}
By contradiction, suppose $\mathbf{r}(s)$ and $\tilde{\mathbf{r}}(s)$ are
optimal paths for two states $\mathbf{u}$ and $\tilde{\mathbf{u%
}}=\alpha \mathbf{u}$ such that $v(\mathbf{u}) \ne \alpha^2 v(\mathbf{u}_0)$,
so $\tilde{\mathbf{%
r}}(s) \ne \alpha\,\mathbf{r}(s)$.  
Then either $\mathbf{r}%
(s)$ or $\tilde{\mathbf{r}}(s)$ cannot be optimal. 
First, if $v(\alpha\mathbf{u}) > \alpha^2 v(\mathbf{u})$, then 
$\tilde{\mathbf{r}}(s)=\alpha \mathbf{r}(s)$ has lower cost than $%
\mathbf{r}(s)$. 
The other case is similar.
   \end{proof}


\begin{corollary}
\label{cor:theta}
    The value function for any scale-free path 
    scales quadratically in the Euclidean distance $|\bu|$,
\begin{equation*}\label{eq:r2w}
    \vf_{\br}(\bu) =
    |\bu|^2 w(\theta),
    \end{equation*}
    where $w(\theta)$ is the value function at unit distance in the direction $\theta$ in $\sn$-dimensional space.
\end{corollary}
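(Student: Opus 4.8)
The statement to prove is Corollary~\ref{cor:theta}: that for a scale-free path, $\vf_{\br}(\bu) = |\bu|^2 w(\theta)$, where $\theta$ encodes the direction of $\bu$ and $w(\theta)$ is the value function for the unit vector in that direction. The plan is to read this as an immediate consequence of Lemma~\ref{lem:vf-scaling}, together with the observation that any state $\bu$ can be written as a positive scalar multiple of a unit vector. So the first step is to fix the direction: write $\bu = |\bu|\,\hat{\bu}$, where $\hat{\bu} = \bu/|\bu|$ is the unit vector pointing in direction $\theta$ (here $\theta$ is just a name for the point $\hat{\bu}$ on the unit sphere, i.e.\ for the direction).

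Second, I would invoke the scale-free structure established in Corollary~\ref{cor:scale-free-optimal}: the optimal path from $\bu$ is obtained by scaling the optimal path from $\hat{\bu}$ by the factor $\alpha = |\bu|$. Then Lemma~\ref{lem:vf-scaling}, applied with $\br_0$ the optimal reference path ending at $\hat{\bu}$ and $\alpha = |\bu|$, gives
\begin{equation*}
\vf_{\br}(\bu) = \vf_{\br}(|\bu|\,\hat{\bu}) = |\bu|^2\, \vf_{\br_0}(\hat{\bu}).
\end{equation*}
Third, I would simply \emph{define} $w(\theta) := \vf_{\br_0}(\hat{\bu})$, the value function at unit distance in direction $\theta$, which is exactly the quantity named in the statement; substituting yields $\vf_{\br}(\bu) = |\bu|^2 w(\theta)$, completing the argument. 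One should note that $w$ depends only on the direction and not on the magnitude, which is precisely the content of the scaling lemma applied at the unit sphere.

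I do not expect a serious obstacle here — the result is essentially a restatement of Lemma~\ref{lem:vf-scaling} in polar-type coordinates. The only point requiring a little care is bookkeeping about what ``scale-free path'' means and which reference path $\br_0$ one scales: one must be consistent that the reference path is taken to end at the unit vector $\hat{\bu}$ so that scaling by $\alpha=|\bu|$ recovers a path to $\bu$, and that this scaled path is indeed the one referenced by $\vf_{\br}$. A secondary subtlety is that $w(\theta)$ is well-defined (independent of which admissible scale-free path through direction $\theta$ we pick) only if we are speaking of a fixed family of scale-free paths — e.g.\ the optimal one; the corollary is stated for ``any scale-free path,'' so $w$ should be understood as attached to that particular family, which the scaling lemma guarantees is internally consistent.
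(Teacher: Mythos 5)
Your proposal is correct and follows exactly the route the paper intends: the corollary is stated as an immediate consequence of Lemma~\ref{lem:vf-scaling}, obtained by writing $\bu = |\bu|\,\hat{\bu}$, applying the lemma with $\alpha = |\bu|$, and defining $w(\theta)$ as the value at the unit vector in direction $\theta$. The paper offers no separate proof, so your careful bookkeeping about the reference path is, if anything, more explicit than the original.
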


Due to the scaling property, we need only determine $w(\theta)$ in an $\sn-1$ dimensional surface 
to obtain the optimal paths for every state in the $\sn$-dimensional space.

\section{Optimal Paths with Two Servers}
\label{sect:two-server}

For $n=2$ servers, both with service
rate $1/2$, 
we assume that $u_{1}\geq u_{2}$, so that $u_1$ corresponds to the horizontal axis.
Now
it is also convenient to consider an alternate state
representation, $(x,y)$, where $x=u_{1}+u_{2}$ is the total backlog and $y=u_{1}-u_{2}$ is the imbalance. 
From the scale-free property (Corollary \ref{cor:scale-free-optimal}) 
the optimal path depends only on the relative imbalance, 
either captured by $\theta$, the angle for the point $(u_1,u_2)$, or the relative queue difference, 
$\hat{y}=y/x$. 
This means that if, for some point $(x_0,y_0)$ with relative imbalance $\hat{y}_0$, the optimal $\hat{y}'_0=0$, i.e., the relative imbalance should not change, then this will be true until the system empties. We call such a path a straight to the origin (STO) path. 
Because our paths are continuous, the path will visit a point with a given $\hat{y}$ either exactly once, or will maintain $\hat{y}$ on an STO path.
Indeed, we have the following corollary.

\begin{corollary}[Monotonicity of optimal paths]
\label{cor:monotonicity} For two servers, 
the optimal fluid path is 
monotonic in $\hat{y}$. 
That is, from any initial point $(x,y)$, the optimal path will always increase or decrease the relative imbalance until the system empties or until some $\hat{y}_0$, after which it will follow an STO path.
\end{corollary}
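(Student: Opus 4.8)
The plan is to exploit the scale-free structure (Corollary~\ref{cor:scale-free-optimal}) together with the continuity of admissible paths, reducing the two-server problem to a one-dimensional statement about the relative imbalance $\hat y = y/x$. The key observation is that the optimal path from any state $(x,y)$ is unique (up to reparametrization): if two distinct optimal paths left the same point, we could construct a strictly cheaper path, contradicting optimality — or, alternatively, the dynamic-programming/Bellman structure forces the optimal continuation from each state to be determined solely by the state. Since the optimal continuation depends only on $\hat y$ (not on the scale $x$), the optimal trajectory, viewed in the $\hat y$ coordinate, is governed by an autonomous first-order equation $\hat y\,' = G(\hat y)$ for some function $G$ induced by the optimal control. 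An autonomous scalar ODE cannot have non-monotone solutions: trajectories are monotone until they reach a zero of $G$, where they either stop changing $\hat y$ (an STO path) or, by uniqueness, cannot cross. This is exactly the claimed dichotomy.

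Carrying this out, I would proceed in the following steps. First, formalize the reduction: using Corollary~\ref{cor:theta}, write the value function along any scale-free path as $v_{\br}(\bu)=|\bu|^2 w(\theta)$, and argue that the optimal direction at a state is a function of $\theta$ (equivalently of $\hat y$) alone — this is the content of the ``scale-free paths are optimal'' corollary restated pointwise. Second, translate the dynamics into the $(x,\hat y)$ coordinates: since $x$ is strictly decreasing (total drift $1-\rho>0$ while $n_0<n^*$), we may use $x$ as the time-like parameter and obtain $d\hat y/dx = \Psi(\hat y)$ for a well-defined $\Psi$ determined by the optimal split at relative imbalance $\hat y$. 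Third, invoke uniqueness of the optimal path from each state (argued by the interchange/contradiction style already used in Proposition~\ref{prop:short-to-short} and Corollary~\ref{cor:scale-free-optimal}) to conclude that $\Psi$ is genuinely a function — the optimal path cannot branch. Fourth, apply the elementary fact that solutions of an autonomous scalar ODE are monotone: if $\hat y$ ever reversed direction it would have to pass through a value $\hat y_0$ with $\Psi(\hat y_0)=0$, but then the STO path through $\hat y_0$ is also optimal from that point, and uniqueness forbids leaving it; hence once $\Psi$ vanishes the path stays on the STO ray, and before that $\hat y$ is strictly monotone. Finally, combine with the continuity remark already in the text (each $\hat y$-value is visited at most once unless it is the STO value) to state the dichotomy precisely.

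The main obstacle I anticipate is Step~3: rigorously establishing that the optimal path from a given state is \emph{unique}, so that ``the optimal $\hat y\,'$ at relative imbalance $\hat y$'' is well-defined rather than set-valued. Strict convexity of the cost functional in the control would give this immediately, but the cost rate $c(\bu)=n\sum_i\lambda_i u_i$ is only linear in the $\lambda_i$, so one must instead argue via the structure of the problem — e.g., that any two optimal continuations can be averaged (or that on any region where two optimal directions coexist, one can shortcut between them) to produce a strictly better path, using the same interchange argument as in Proposition~\ref{prop:short-to-short}. A secondary, more technical point is handling the boundary of the state space and the transition to the absorbing region ($n_0=n^*$), where the drift formula changes; there $\hat y$ necessarily increases toward $1$ as the shorter queue empties, which is consistent with (indeed a special case of) monotonicity, so it fits the statement but should be noted explicitly. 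Once uniqueness is in hand, the monotonicity conclusion is essentially the qualitative theory of scalar autonomous ODEs and requires no further computation.
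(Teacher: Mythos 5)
Your proposal follows essentially the same route as the paper: the scale-free property makes the optimal control a function of $\hat y$ alone, $x$ serves as the time-like parameter, and continuity of the resulting autonomous scalar dynamics forces $\hat y$ to be monotone until it hits a fixed point, after which the path is STO. The uniqueness/selection issue you flag in Step~3 is a real subtlety, but the paper leaves it equally implicit, so your treatment is at the same level of rigor as the original argument.
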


A consequence is that if the optimal policy empties queue 2 before queue 1, queue 2 will remain empty, even if $\rho>1/2$. Similarly, if the optimal path moves to perfectly balanced queues, they will remain balanced.

For each point $(x,y)$ any path is characterized by $y'=dy/dx$, which in turn, for an optimal path, is controlled by a
 single job-size threshold $h=h(x,y)=h(u_1,u_2)$: jobs shorter than $h$ are
assigned to queue 2 and the rest to queue 1.
Let $%
g(h)$ denote the load due to particles (jobs) smaller than $h$, 
and recalling that $\lambda =\rho $, we have
\begin{equation*}
g(h):=\rho \,\int_{0}^{h}z\,f(z)\,dz
=\rho\, \PP{ X\leq h }\cdot E[X|X\leq h].
\label{eq:rho-h}
\end{equation*}%
Then, while $u_2>0$ ($y<x$), $u_{2}$ changes at rate $g(h)-1/2$, $u_{1}$ changes at rate $\rho-g(h)-1/2$, $x$ changes at rate $\rho-1$, and 

\[
    y^{\prime }=\frac{(\rho -g(h)-1/2)-(g(h)-1/2)}{\rho -1}=\frac{2g(h)-\rho }{%
1-\rho },
\]
so $g(h)=(\rho +(1-\rho )y^{\prime })/{2}$. 
Because $0\leq g(h)\leq \rho $, we have, for $y<x$, that $|y^{\prime }|=|y^{\prime }(x)|\leq \rho /(1-\rho )$. That is, as the load increases there is more room to maneuver in terms of changing the path direction, including having the ability to increase (temporarily) the load at one server if $\rho>1/2$.

To capture total costs in terms of the control $y^{\prime }$, let ${\Phi }%
(y^{\prime })$ denote the fraction of jobs forwarded to queue 2, 
\begin{equation*}
{\Phi }(y^{\prime }):=F\left[ g^{-1}\left( \frac{\rho +(1-\rho )y^{\prime }}{%
2}\right) \right] .  \label{eq:Fyd}
\end{equation*}%
The cost rate is then
\begin{equation*}
2\rho (u_{2}\,\Phi
(y^{\prime })+u_{1}\,(1-\Phi
(y^{\prime })))=\rho (x+(1-2\Phi
(y^{\prime }))y),
\label{eq:cost-rate-Fu-2}
\end{equation*}
and the total cost for an arbitrary path $y^{\prime }=y^{\prime }(x)$ (and $y(x)$) from initial state $(x_{1},y_{1})$ is
\begin{equation*}
\label{eq: total cost}
\vf(x_{1},y_{1}) =
\frac{\rho }{1-\rho }\left[ x_{1}^{2}/2+\int_{0}^{x_{1}}(1-2\Phi
(y^{\prime }))y\,dx\right] .
\end{equation*}%
Because $%
x_{1}^{2}$ is independent of the path, an optimal policy minimizes 
\begin{equation*}
\label{eqn:T1}
\Tf=\int_{0}^{x_{1}}(1-2\Phi (y^{\prime })y)\,dx.
\end{equation*}%

As observed in Corollary~\ref{cor:monotonicity}, the optimal $\hat{y}$
will be monotone until some $\hat{y}_0 = y_0/x_0$, after which it will be constant until $(x,y)=(0,0)$. For such an STO path, $y'=\hat{y}_0$ and
\begin{equation*}
\label{eqn:STO}
T_{\text{STO}}(x_0,y_0)
=
x_0^2\hat{y}_0(1-2\Phi (\hat{y}_0)).
\end{equation*}
When $\hat{y}=1$ ($u_2=0$), $%
\Phi (\hat{y})=\Phi (1)=F(g^{-1}(1/2))$, and
$T_{\text{STO}}(x_{0},x_{0})=
x_1^2(1-2\Phi (1))$. 
If $u_2=0$ and $\rho <1/2$, then $\Phi (1)=1$ and $T_{\text{STO}}(x_{0},x_{0})=0$. \ 

Because $F$ and $g$ are nondecreasing, so is $\Phi $, and $\Phi (%
\hat{y})\geq \Phi (0)>1/2$. Taking the derivative we have the following.
\begin{prop} The cost of an STO path, from any point $(x_0,y_0)$ 
directly to the origin, is decreasing in 
$\hat{y_0}=y_0/x_0$, for any job-size distribution.
\end{prop}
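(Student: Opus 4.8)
The plan is to show that the map $\hat y_0 \mapsto T_{\text{STO}}(x_0,y_0)$, for fixed total backlog $x_0$, is decreasing by differentiating the closed-form expression $T_{\text{STO}}(x_0,y_0)= x_0^2\,\hat y_0\,(1-2\Phi(\hat y_0))$ with respect to $\hat y_0$ and arguing the derivative is nonpositive on $[0,1]$. Writing $\phi(\hat y):= \hat y(1-2\Phi(\hat y))$, we have $\phi'(\hat y) = (1-2\Phi(\hat y)) - 2\hat y\,\Phi'(\hat y)$, so the statement reduces to the inequality $1-2\Phi(\hat y) \le 2\hat y\,\Phi'(\hat y)$ for all $\hat y\in[0,1)$ (and at $\hat y=1$ as well, treating one-sided derivatives where necessary).

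First I would record the two sign facts already assembled just above the proposition: $\Phi$ is nondecreasing (composition of the nondecreasing $F$ and $g^{-1}$) and $\Phi(\hat y)\ge \Phi(0) > 1/2$. The second fact immediately gives $1-2\Phi(\hat y) < 0$ on all of $[0,1]$. Meanwhile $\Phi$ nondecreasing gives $\Phi'(\hat y)\ge 0$ wherever the derivative exists, and $\hat y \ge 0$, so the right-hand side $2\hat y\,\Phi'(\hat y)\ge 0$. Hence $1-2\Phi(\hat y) < 0 \le 2\hat y\,\Phi'(\hat y)$, which is exactly the required inequality; therefore $\phi'(\hat y) < 0$ and $T_{\text{STO}}$ is (strictly, in fact) decreasing in $\hat y_0$. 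The factor $x_0^2>0$ does not affect monotonicity in $\hat y_0$.

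The only real subtlety — and the step I expect to need the most care — is differentiability of $\Phi$. Since $\Phi(y') = F\big[g^{-1}((\rho+(1-\rho)y')/2)\big]$ and $F$, $g$ need not be smooth or strictly increasing for a general job-size distribution, $\Phi$ may fail to be differentiable at isolated points and $g^{-1}$ must be read as a generalized inverse. The clean way around this is to avoid differentiation entirely: prove directly that $\hat y\mapsto \hat y(1-2\Phi(\hat y))$ is decreasing by a monotonicity comparison. For $0\le \hat y_1 < \hat y_2 \le 1$, write
\begin{equation*}
\hat y_1(1-2\Phi(\hat y_1)) - \hat y_2(1-2\Phi(\hat y_2)) = (\hat y_1-\hat y_2)(1-2\Phi(\hat y_1)) + 2\hat y_2(\Phi(\hat y_2)-\Phi(\hat y_1)).
\end{equation*}
The first term is a product of a negative number $(\hat y_1 - \hat y_2)$ and a negative number $(1-2\Phi(\hat y_1) < 0$ since $\Phi \ge \Phi(0) > 1/2)$, hence positive; the second term is nonnegative since $\hat y_2\ge 0$ and $\Phi$ is nondecreasing. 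So the difference is positive, i.e. $T_{\text{STO}}$ strictly decreases in $\hat y_0$, with no regularity assumptions on $F$ at all. I would present this finite-difference argument as the proof and mention the derivative computation only as the heuristic motivation, so that the claim ``for any job-size distribution'' is genuinely justified.
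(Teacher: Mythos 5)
Your argument is essentially the paper's: the paper also notes that $\Phi$ is nondecreasing with $\Phi(\hat y)\ge\Phi(0)>1/2$ and then differentiates $x_0^2\,\hat y_0(1-2\Phi(\hat y_0))$, observing that both terms of the derivative are nonpositive. Your finite-difference version is a worthwhile tightening --- it makes the clause ``for any job-size distribution'' honest by dispensing with differentiability of $\Phi$ (and hence with regularity of $F$ and $g^{-1}$) --- but the underlying sign argument is the same.
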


We conjecture that the optimal $\hat{y}'\ge 0$ (the relative imbalance should increase for the optimal path).
This is supported numerically as well as by the proposition above.


Let us now consider the optimal policy when $\rho \rightarrow 1$. Define the MWL (most work left) policy as the policy that routes all jobs (fluid) to the long queue until the short queue is empty; then short jobs are routed to the short queue such that the rate of fluid to the short queue is $\max \{\rho ,1/2\}$.
It is not hard to derive the following, where $\bar{\Phi}(x)=1-\Phi (x)$.
\begin{prop}
\label{prop:MWL}
     The MWL value function, $\vf_{\text{MWL}}(\bu)$, is
     \begin{equation*}\label{eq:vf-rho-1}
    2\rho[2 u_1 u_2 + (2\rho-1) u_2^2 +
         \frac{\rho \bar{\Phi} (1)}{1-\rho}(u_1+u_2(2\rho-1))^2]
         \end{equation*}
         and MWL is asymptotically optimal, with
         \begin{equation*}
         (1-\rho )v_{\text{MWL}}(\mathbf{u})\rightarrow 
2\rho \bar{\Phi} (1)(u_{1}+u_{2})^{2}=2\rho \bar{\Phi} (1)x^2.
\end{equation*}
\end{prop}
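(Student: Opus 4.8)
The plan is to track the MWL trajectory in the state space explicitly, in two phases, and integrate the cost rate
$c(\bu,h) = \rho(x + (1-2\Phi(y'))y)$ along it, then take $\rho\to 1$.
First I would set up Phase~1, in which all fluid is routed to the long queue (say queue~1) until queue~2 empties. During this phase $u_2$ drains at the full rate $1/n = 1/2$, while $u_1$ grows at rate $\rho - 1/2$ (the unrouted-away portion of the arrival stream, minus its own service). So Phase~1 lasts until $u_2=0$; its duration is $t_1 = 2 u_2$ (in the original time scale, since queue~2 drains from $u_2$ at rate $1/2$), at the end of which queue~1 holds $u_1 + (\rho-1/2)(2u_2) = u_1 + (2\rho-1)u_2$. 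Along this segment the cost rate is $2\rho$ times the sum of the backlog-weighted arrival rates; since \emph{all} arrivals go to queue~1, this is $2\rho \cdot \lambda \cdot u_1(t)$ with $\lambda=\rho$ and $u_1(t)=u_1 + (\rho-1/2)t$, plus the residual cost charged against queue~2's own decreasing backlog? — I need to be careful here about whether queue~2 still incurs cost while draining; by \eqref{eq:cost-rate-Fu} only \emph{arriving} jobs incur cost, and no jobs are routed to queue~2 in Phase~1, so queue~2 contributes nothing. Integrating $2\rho^2 u_1(t)\,dt$ over $[0,2u_2]$ gives a clean quadratic in $u_1,u_2$.

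Next I would handle Phase~2, which starts from the state $(x_2,y_2)$ with $u_1^{(2)} = u_1 + (2\rho-1)u_2$ and $u_2^{(2)}=0$, i.e.\ $\hat y = 1$. By the definition of MWL, from here fluid enters the short queue at rate $\max\{\rho,1/2\}=\rho$ (since $\rho\to 1$), meaning queue~2 stays empty (its in-rate equals its out-rate) and queue~1 drains at rate $1-\rho$; this is exactly an STO path with $\hat y \equiv 1$. The cost rate here is the $\hat y=1$ special case noted in the text, $c = \rho\,x\cdot 2\bar\Phi(1)$ — wait, more precisely $c(\bu,1) = 2\rho(u_2\Phi(1) + u_1\bar\Phi(1)) = 2\rho u_1 \bar\Phi(1)$ since $u_2=0$, and $u_1 = x$ here. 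With $u_1$ draining from $u_1^{(2)}$ at rate $1-\rho$, the Phase~2 cost is $\int_0^{u_1^{(2)}/(1-\rho)} 2\rho\bar\Phi(1)(u_1^{(2)} - (1-\rho)t)\,dt = \rho\bar\Phi(1)(u_1^{(2)})^2/(1-\rho)$. Substituting $u_1^{(2)} = u_1 + (2\rho-1)u_2$ recovers the dominant $\frac{\rho\bar\Phi(1)}{1-\rho}(u_1+u_2(2\rho-1))^2$ term inside the bracket of the claimed formula.

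Adding the Phase~1 and Phase~2 contributions and collecting terms should reproduce $v_{\text{MWL}}(\bu) = 2\rho[\,2u_1u_2 + (2\rho-1)u_2^2 + \frac{\rho\bar\Phi(1)}{1-\rho}(u_1+u_2(2\rho-1))^2\,]$ exactly; the $2u_1u_2$ and $(2\rho-1)u_2^2$ pieces come from Phase~1 (they are the terms that survive after expanding the quadratic-in-$t$ integral and simplifying with $\lambda=\rho$). For the asymptotic statement I would multiply by $(1-\rho)$: the Phase~1 terms and the $u_1^{(2)}$-expansion cross terms are $O(1)$ and vanish, while $(1-\rho)\cdot\frac{2\rho^2\bar\Phi(1)}{1-\rho}(u_1+u_2(2\rho-1))^2 \to 2\rho\bar\Phi(1)(u_1+u_2)^2 = 2\rho\bar\Phi(1)x^2$, using $(2\rho-1)\to 1$. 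Finally, asymptotic optimality follows by sandwiching: $v_{\text{opt}} \le v_{\text{MWL}}$ trivially, and for the lower bound I would argue that any path must at some point "pay" the $x^2$ cost of draining the total backlog at rate $1-\rho$, which dominates everything as $\rho\to 1$; more carefully, one can lower-bound $T = \int_0^{x_1}(1-2\Phi(y'))y\,dx \ge 0$ and note $v \ge \frac{\rho}{1-\rho}x_1^2/2$, but this gives the wrong constant — so the real lower bound argument must use that $\Phi(y')\le \bar F$-type bounds force $(1-2\Phi(y'))\ge 1-2\bar\Phi(1)$-ish along feasible directions; I expect \textbf{this lower-bound matching step to be the main obstacle}, and I would either invoke a separate heavy-traffic optimality result (the paper cites CARD's heavy-traffic optimality from \cite{xie-2024}, which presumably pins down the constant $2\rho\bar\Phi(1)$) or prove directly that no feasible $y'(x)$ can make $\frac{\rho}{1-\rho}[x_1^2/2 + T]$ asymptotically smaller than $2\rho\bar\Phi(1)x_1^2$ by bounding $T$ from below using $|y'|\le\rho/(1-\rho)$ and monotonicity from Corollary~\ref{cor:monotonicity}.
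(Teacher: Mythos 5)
Your two-phase decomposition (route all fluid to the long queue until the short one empties, then follow the STO path along the $u_1$-axis) is exactly the paper's own argument, and your Phase-2 integral and the final $(1-\rho)$-scaling step track it closely. Two bookkeeping points need fixing, though. First, in Phase~1 you double-count the arrival rate: by \eqref{eq:cost-rate-Fu} the cost rate is $n\lambda_1 u_1(t)=2\rho\,u_1(t)$ (equivalently, $2\rho$ times the backlog weighted by the \emph{fraction} of jobs sent to queue~1, which is $1$), not $2\rho\cdot\lambda\cdot u_1(t)=2\rho^2 u_1(t)$. Integrating your integrand yields $2\rho^2\bigl(2u_1u_2+(2\rho-1)u_2^2\bigr)$ rather than the required $2\rho\bigl(2u_1u_2+(2\rho-1)u_2^2\bigr)$, and no ``simplification with $\lambda=\rho$'' removes the extra factor. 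Second, your Phase-2 cost $\frac{\rho\bar{\Phi}(1)}{1-\rho}\bigl(u_1+(2\rho-1)u_2\bigr)^2$ is precisely what the paper's own computation produces, but it is \emph{not} the term ``inside the bracket'' of the stated formula, which carries an additional outer factor of $2\rho$; you should flag this tension rather than claim exact recovery (the inconsistency lies between the proposition's displayed formula and the derivation, not in your integral). Finally, your concern about the missing matching lower bound for asymptotic optimality is legitimate but not an obstacle you need to overcome here: the paper likewise only computes $v_{\text{MWL}}$ and its scaled limit and appeals to the heavy-traffic optimality result of the cited CARD paper for the matching bound, which is exactly the fallback you propose.
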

The first two terms in the MWL value function correspond to the cost to empty the short queue, and the last term (the dominating term in heavy traffic) is the cost for the STO path that then empties the long queue once the short queue is empty.
For the original stochastic dispatching system with identical servers, the mean waiting time is
$\E{W}=\E{ \vf(X,0) }$ \cite{hyytia-peva-2024}, i.e.,
$
(1-\rho) \E{W}\rightarrow 2\rho (1-\Phi (1))\bE{  X^2 }$.
This agrees with the exact analysis of the asymptotically optimal policy for the original
stochastic system \cite{xie-2024}, suggesting that the far more tractable fluid value function may be a good approximation for the 
original value function 
even when $\rho <1$. 


We now consider the effect of job size variability on the dispatching cost. Intuitively, since the policy uses job-size information, we expect more variability to decrease costs, which is in fact the case. Recall that for two random variables $X$ and $Y$, $X$ is more variable than $Y$ in the convex sense, $X\geq _{cx}Y$, if $Ef(X)\geq Ef(Y)$ for all convex functions $f$.
The proof of the following is in the appendix.
\begin{prop}
\label{prop:cx}
$X\geq _{cx}Y\Rightarrow $ the cost is lower for any path when $X$ is a random job size rather than $Y$, so the
optimal cost is also lower.
\end{prop}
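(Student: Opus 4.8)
The plan is to show that the cost functional $\Tf=\int_0^{x_1}(1-2\Fyd(y'))y\,dx$ is, along any fixed path $y(x)$, monotone decreasing when we replace the job-size variable $Y$ by a more variable $X$, and then note that this immediately implies the same for the optimal cost (the infimum over paths of a family of functionals that are pointwise ordered is ordered the same way). So the crux is a pointwise-in-$x$ comparison of the integrand for a fixed path, which reduces to a comparison of $\Fyd_X(y')$ versus $\Fyd_Y(y')$ for each admissible $y'\in[-\rho/(1-\rho),\rho/(1-\rho)]$.

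First I would unwind $\Fyd$. By \eqref{eq:Fyd}, $\Fyd(y')=F[g^{-1}(z)]$ with $z=z(y')=(\rho+(1-\rho)y')/2\in[0,\rho]$, and $z$ is increasing in $y'$. Next I would establish the key monotonicity chain: if $g_X(h)\le g_Y(h)$ for all $h\ge 0$, then, since $g$ is nondecreasing, $g_X^{-1}(z)\ge g_Y^{-1}(z)$ for all $z\in[0,\rho]$ (inverting reverses the inequality), and then, since $F$ is nondecreasing, $\Fyd_X(y')=F_X(g_X^{-1}(z))$... — here I must be a little careful, because $F$ also depends on the distribution, so this step really needs the identity $\Fyd(y')=$ ``the fraction of load routed that comes from the small jobs, re-expressed as a fraction of \emph{count}''. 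Cleaner: note directly that $1-2\Fyd(y')$ times $y$ is $2u_1-$ (twice the count-fraction to queue 2) and that the sign/size comparison we want is really about the \emph{load} split. Actually the slickest route is: the cost rate \eqref{eq:cost-rate-Fu-2} equals $\rho x + (u_1-u_2)(\rho - 2g(h))$ where $g(h)=z$ is \emph{fixed by the control} $y'$ (since $g(h)=(\rho+(1-\rho)y')/2$ is exactly what $y'$ prescribes), and $g(h)$ does \emph{not} depend on the distribution once $y'$ is fixed. Wait — that would make $\Tf$ distribution-independent, which is wrong; the resolution is that $\Fyd(y')=F(h)$ is the \emph{count} fraction, not the load fraction, so it is $F$ evaluated at $h=g^{-1}(z)$, and both $g^{-1}$ and $F$ move with the distribution. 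So I do need the two-step chain above, and the honest statement is: for fixed $y'$ (fixed $z$), $\Fyd_X(y')\ge\Fyd_Y(y')$ provided $g_X(h)\le g_Y(h)$ for all $h$, because $F_X(g_X^{-1}(z))\ge F_Y(g_X^{-1}(z))$ would need a stochastic-dominance-type fact — instead one uses that $F(g^{-1}(z))=\PP{\text{load-quantile at level }z}$ and argues via the representation $g(h)=\rho(1-\E{(X-h)^+})$ directly.

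So the real engine is the convex-order input. I would show $X\ge_{cx}Y\Rightarrow \E{X}=\E{Y}(=1)$ and $g_X(h)\le g_Y(h)$ for all $h\ge0$ via the identity
\[
g(h)=\rho\int_0^h z f(z)\,dz = \rho\bigl(\E{X}-\E{X\,\mathcal I(X>h)}\bigr)=\rho\bigl(1-\E{(X-h)^+}\bigr)-\rho h\,\bar F(h),
\]
and more usefully $g(h)=\rho\E{\min(X,h)}-\rho h\bar F(h)+\dots$; the clean form is $g(h)=\rho\bigl(1-\E{(X-h)^+}-h\bar F(h)\bigr)$, equivalently $\rho - g(h)=\rho\,\E{\max(X,h)-h}=\rho\,\E{(X-h)^+}+\rho h \bar F(h)$. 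In any case $h\mapsto (x-h)^+$ and $h\mapsto\min(x,h)$ are convex/concave in $x$, so $X\ge_{cx}Y$ gives the needed inequality between $g_X$ and $g_Y$ (with the right sign), hence $g_X^{-1}(z)\ge g_Y^{-1}(z)$, hence $\Fyd_X(y')\ge\Fyd_Y(y')$ for all admissible $y'$, hence $(1-2\Fyd_X(y'))y\le(1-2\Fyd_Y(y'))y$ pointwise along the path (using $y\ge0$ since $u_1\ge u_2$), hence $\Tf_X\le\Tf_Y$ for every path, hence $\min_{\text{paths}}\Tf_X\le\min_{\text{paths}}\Tf_Y$, and finally $\vf_X(x_1,y_1)\le\vf_Y(x_1,y_1)$ since the two value functions differ from $\tfrac{\rho}{1-\rho}\bigl[x_1^2/2+\Tf\bigr]$ only through $\Tf$. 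The main obstacle is getting the $F\circ g^{-1}$ comparison airtight despite \emph{both} $F$ and $g^{-1}$ depending on the distribution; I expect to sidestep it by working with the load-fraction $z=g(h)$ as the primitive control variable and expressing the integrand purely through $z$ and the \emph{count} fraction $\Fyd=F(g^{-1}(z))$, then proving the one clean monotonicity fact "$X\ge_{cx}Y\Rightarrow F_X(g_X^{-1}(z))\ge F_Y(g_Y^{-1}(z))$ for all $z\in[0,\rho]$" as a standalone lemma via the $(x-h)^+$ representation.
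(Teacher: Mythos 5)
Your overall architecture (pointwise comparison of $\Phi_X(y')$ and $\Phi_Y(y')$ for each admissible $y'$, then integration along a fixed path, then the infimum over paths) is exactly the paper's, and your two worries are both legitimate: the identity $g(h)=\rho[1-\E{(X-h)^+}]$ is indeed off by the term $\rho h\bar F(h)$, and the step from $g_X^{-1}(z)\ge g_Y^{-1}(z)$ to $F_X(g_X^{-1}(z))\ge F_Y(g_Y^{-1}(z))$ needs more than monotonicity of $F$, since $F$ itself changes with the distribution. The problem is that your main line of argument still runs through the claim ``$X\geq_{cx}Y\Rightarrow g_X(h)\le g_Y(h)$ for all $h$,'' and that claim is false. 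With the corrected identity $g(h)=\rho\,\E{\min(X,h)}-\rho h\bar F(h)$, the term $h\bar F(h)=h\,\E{\mathbf{1}(X>h)}$ is neither a convex nor a concave functional of the distribution, so convex order gives no control over it. Concretely, take $X\sim U(0,2)$ and $Y\sim U(1/2,3/2)$, so $X\geq_{cx}Y$; then $g_X(h)=\rho h^2/4>0=g_Y(h)$ for $0<h<1/2$, even though the proposition's conclusion still holds for this pair. Your fallback --- prove $F_X(g_X^{-1}(z))\ge F_Y(g_Y^{-1}(z))$ directly as a standalone lemma --- is the right target, but you never prove it, and ``via the $(x-h)^+$ representation'' of $g$ alone cannot work for the reason just given.

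The fact that does close the argument is the Lorenz-curve characterization of convex order. Writing $g(h)=\rho\int_0^{F(h)}F^{-1}(u)\,du$, the quantity $\Phi=F(g^{-1}(z))$ is the inverse Lorenz curve evaluated at the load fraction $z/\rho$; for equal-mean variables, $X\geq_{cx}Y$ is equivalent to $L_X(p)\le L_Y(p)$ for all $p$, where $L(p)=\int_0^p F^{-1}(u)\,du$. This is where $(x-h)^+$ legitimately enters: $\int_p^1 F^{-1}(u)\,du=\min_h\{(1-p)h+\E{(X-h)^+}\}$, so $\int_p^1F_X^{-1}(u)\,du\ge\int_p^1F_Y^{-1}(u)\,du$, hence $L_X\le L_Y$, hence $\Phi_X(y')=L_X^{-1}(z/\rho)\ge L_Y^{-1}(z/\rho)=\Phi_Y(y')$. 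From there your integration and infimum steps go through (using $y\ge0$). For what it is worth, the paper's appendix proof follows the same chain you started with, including the defective identity for $g$, so your suspicion about where the argument is shaky was well placed; but flagging the gap is not the same as filling it, and as written your proposal does not contain a proof of the one inequality everything hinges on.
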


Now let us consider the extreme case, where all jobs have the same size, $X \equiv 1$. This is equivalent to allowing an arbitrary job-size distribution but job sizes are not observed or are not used in the dispatching algorithm. In this case, we can show that
any work-conserving dispatching policy is optimal: the proof is in the appendix.

\begin{prop}
\label{prop:size-unaware}
  All size-unaware work-conserving dispatching policies are equally good.
\end{prop}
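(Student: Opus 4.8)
The plan is to show that when $X\equiv 1$, the integral $T=\int_0^{x_1}(1-2\Phi(y'))y\,dx$ is actually independent of the path $y(x)$, so that every work-conserving (nonidling) policy attains the same—and hence optimal—value. The starting point is to compute $\Phi(y')$ explicitly in the degenerate case. With $X\equiv 1$ we have $g(h)=\rho\int_0^h z f(z)\,dz = \rho\cdot\mathbf 1(h\ge 1)$, so $g$ is a step function jumping from $0$ to $\rho$ at $h=1$. Its generalized inverse $g^{-1}(z)$ equals $1$ for all $z\in(0,\rho)$, and $F(1^-)=0$, $F(1)=1$. Thus $\Phi(y')=F[g^{-1}((\rho+(1-\rho)y')/2)]$, and for every admissible $y'$ with $|y'|<\rho/(1-\rho)$ the argument $(\rho+(1-\rho)y')/2$ lies strictly inside $(0,\rho)$, giving $g^{-1}(\cdot)=1$ and therefore $\Phi(y')=F(1)=1$ (or $F(1^-)=0$ at the boundary, which is a measure-zero set we can ignore, or handle by noting the path spends zero $x$-length there). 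Hence $1-2\Phi(y')=-1$ throughout.

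With $1-2\Phi(y')\equiv -1$, the functional becomes $T=-\int_0^{x_1} y\,dx$, which depends only on the endpoints of the path and not on its shape, provided we can argue $y$ is a well-defined function of $x$ along any nonidling path. This is exactly where Corollary~\ref{cor:monotonicity} or the underlying observation is useful: for nonidling policies $x=u_1+u_2$ strictly decreases at rate $1-\rho>0$, so $x$ is a legitimate parameter and $y=y(x)$ is single-valued. Then $\int_0^{x_1} y(x)\,dx$ is... still not obviously path-independent, since $y(x)$ genuinely varies with the policy. The resolution is that when $\Phi\equiv 1$ (all fluid to the short queue up to rate $1/2$, rest balanced), the cost rate from \eqref{eq:cost-rate-Fu-2} is $\rho(x+(1-2\Phi(y'))y)=\rho(x-y)=2\rho u_2$; but wait—more cleanly, go back before the reduction. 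The total cost is $\vf=\frac{\rho}{1-\rho}[x_1^2/2+\int_0^{x_1}(1-2\Phi(y'))y\,dx]=\frac{\rho}{1-\rho}[x_1^2/2-\int_0^{x_1}y\,dx]$. Reparametrizing by time $t$ (with $dx=(\rho-1)\,dt$, so $dt=dx/(\rho-1)$ and the orientation flips), $\int_0^{x_1}y(x)\,dx = (1-\rho)\int_0^{\tau} y(t)\,dt$ where $y(t)=u_1(t)-u_2(t)$. So $\vf = \frac{\rho}{1-\rho}\cdot\frac{x_1^2}{2} - \rho\int_0^\tau (u_1(t)-u_2(t))\,dt$. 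Now observe that once $\Phi\equiv1$ the policy is forced: it routes at rate $1/2$ to the shorter queue (whenever $u_2<u_1$), i.e. it equalizes the queues as fast as possible and thereafter keeps them equal—there is essentially a unique nonidling policy consistent with $\Phi\equiv1$. But the proposition claims \emph{all} nonidling policies are equal, including ones that are not size-interval policies.

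The cleaner route—and the one I would actually take—is to argue directly at the level of the stochastic/fluid dynamics rather than through the size-threshold reduction. When $X\equiv1$, any dispatching decision is just a choice of which server receives a given unit of incoming fluid; the cost rate is $n\sum_i\lambda_i u_i$ with $\sum_i\lambda_i=\lambda$. For a nonidling policy, $\sum_i u_i$ decreases deterministically at rate $1-\rho$ regardless of the split. I would then show $\frac{d}{dt}\sum_i u_i^2 = 2\sum_i u_i\dot u_i = 2\sum_i u_i(\rho_i - 1/n)$, and since with $X\equiv1$ we have $\rho_i=\lambda_i\E X = \lambda_i$, this is $2\sum_i\lambda_i u_i - \frac{2}{n}\sum_i u_i = \frac{2}{n}\Cr(\bu) - \frac{2}{n}\sum_i u_i$. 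Rearranging, $\Cr(\bu) = \frac{n}{2}\frac{d}{dt}\sum_i u_i^2 + \sum_i u_i$, and integrating from $0$ to $\tau$: $\vf(\bu)=\int_0^\tau\Cr\,dt = \frac{n}{2}(0 - \sum u_i(0)^2) + \int_0^\tau \sum_i u_i(t)\,dt = -\frac{n}{2}\sum_i u_i^2 + \int_0^\tau (\text{total backlog})\,dt$. The total backlog at time $t$ is (for a nonidling policy, before anyone hits an absorbing configuration) $\sum_i u_i(0) - (1-\rho)t$, which is policy-independent, so $\int_0^\tau(\cdot)\,dt$ is policy-independent, and $\tau$ is also determined by the total-backlog dynamics. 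Hence $\vf(\bu)$ is the same for every nonidling policy. The main obstacle is the bookkeeping around absorbing states: once some server empties it may idle, breaking $\rho_i=\lambda_i$ and the clean ODE; I would handle this by noting that with $X\equiv1$ and $\rho<1$ the total-backlog trajectory still hits the absorbing set at a policy-independent time and the $\sum u_i^2$ telescoping still closes, perhaps after a short case analysis of whether $\rho\lessgtr$ the relevant thresholds, or by the standard trick of allowing ``negative idling'' (servers keep draining) which does not change accumulated cost.
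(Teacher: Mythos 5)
Your second argument --- the one you actually commit to --- is correct in its core and is, in substance, the same computation as the paper's, just packaged differently. The paper parametrizes a size-unaware policy by the fraction $p$ routed to server $2$, writes the cost rate as $\rho x-(1-\rho)y'y$, and observes that $\int_0^{x_1}y'y\,dx=y_1^2/2$ by integration by parts, hence is path-independent. Your identity
\[
\Cr(\bu)=\frac{n}{2}\,\frac{d}{dt}\sum_i u_i^2+\sum_i u_i ,
\]
valid because $\rho_i=\lambda_i$ when $X\equiv 1$, is exactly that integration by parts carried out in the original coordinates (for $n=2$, $\sum_i u_i^2=(x^2+y^2)/2$ and the $x^2$ part is path-independent for work-conserving paths). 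What your version buys is that it works verbatim for $n$ servers and for arbitrary splits $(\lambda_1,\dots,\lambda_n)$, whereas the paper's is written for $n=2$. Two remarks on the first half of your writeup: the detour through $\Phi$ is a dead end you correctly abandon, but the underlying point is worth stating cleanly --- for a degenerate job-size distribution the size-threshold parametrization collapses (a threshold sends all or none of the fluid to a queue), so size-unaware policies must be parametrized by a randomized split fraction, which is precisely what the paper does; your closing formula $v=\tfrac{n}{2}\bigl(0-\sum_iu_i(0)^2\bigr)+\int_0^\tau\sum_iu_i(t)\,dt$ does evaluate to the paper's $\bigl(\rho x_1^2-(1-\rho)y_1^2\bigr)/(2(1-\rho))$ when $n=2$, which is a useful sanity check you could include.

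The one genuine gap is the absorption bookkeeping, which you flag but whose proposed fix does not work as stated. The absorbing set is \emph{not} hit at a policy-independent time, nor in a policy-independent state, when absorption can occur before the origin. Concretely, for $n=2$ and $\rho<1/2$ the state $(a,0)$ with $a>0$ is absorbing; take $\rho=0.3$, $X\equiv 1$, initial state $(1,1)$: routing all fluid to one server empties the other at $t=2$ in state $(0,0.6)$ with accumulated cost $0.96$, while balancing the split reaches the origin at $t=20/7$ with cost $6/7\approx 0.857$. Both are work-conserving, so your telescoping cannot close to a common value (the terminal $\sum_iu_i^2$ differs, and after absorption the total backlog no longer drains at rate $1-\rho$, so the integral term is not policy-independent either). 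The same caveat applies to the paper's own integration by parts, which tacitly assumes the path runs all the way to the origin under the interior dynamics; this is automatic for two servers when $\rho\ge 1/2$ (absorption then requires both queues empty), and that is the regime in which the proposition should be read. So: state the restriction (paths reaching the origin before absorption, e.g.\ $\rho\ge 1/2$ for $n=2$), and your argument is complete; without it, no amount of bookkeeping will close the gap, because the claim fails on the boundary cases.
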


\section{Numerical Examples}
\label{sect:examples}
\subsection{Optimal fluid paths}
Let us first study how the shape of the job size distribution
affects the optimal trajectories. Figure~\ref{fig:varying-sizes} shows the optimal paths for exponential, uniform, and 
both bounded and normal Pareto distributions. 
The bounded Pareto distribution with
$\alpha=1$ is truncated 
to $[1/66,6]$, so that the mean
is approximately one, and the variance is approximately 2.
The variance for the Pareto is infinite.
%

The scaling property is clearly visible, and the difference in job-size distributions matters more when the relative imbalance is larger.
It also seems advantageous to empty one queue at maximal speed when backlogs are identical
regardless of the job-size distribution. (The feasibility conditions are indicated by dashed lines on the graph.)
The trajectories are consistent with our conjecture that the relative imbalance always increases 
along any optimal trajectory to the origin (or until one queue empties). 

\begin{figure}
\centering
\includegraphics[width=55mm]{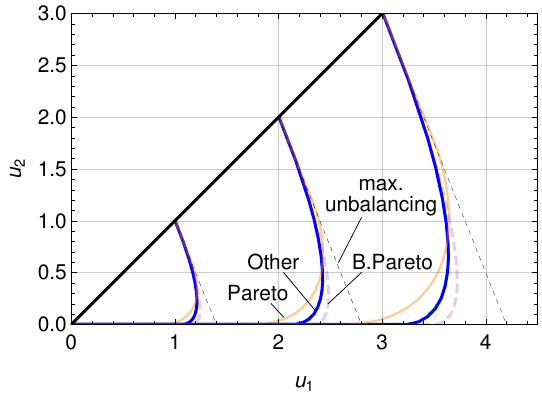}
\caption{Optimal paths with exponential, uniform, and
two Pareto distributions when $\boldsymbol{\rho}\boldsymbol{=0.7}$.
}
\label{fig:varying-sizes}
\end{figure}

\begin{figure}[tb]
\centering
\scriptsize
\includegraphics[width=55mm]{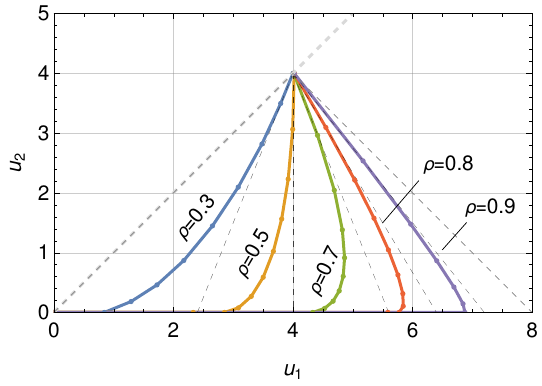}
\caption{Optimal paths depending on $\boldsymbol{\rho}$ for exponential job sizes.}
\label{fig:exp-results-rho}
\end{figure}

Figure~\ref{fig:exp-results-rho} shows
the optimal paths with exponentially distributed job sizes at different loads.
As $\rho$ increases, the
optimal strategy unbalances the queues more, consistent with our heavy-traffic result. 

\subsection{Heuristic policies}

\begin{figure}
    \scriptsize
    \begin{tabular}{cc}
    \includegraphics[width=40mm]{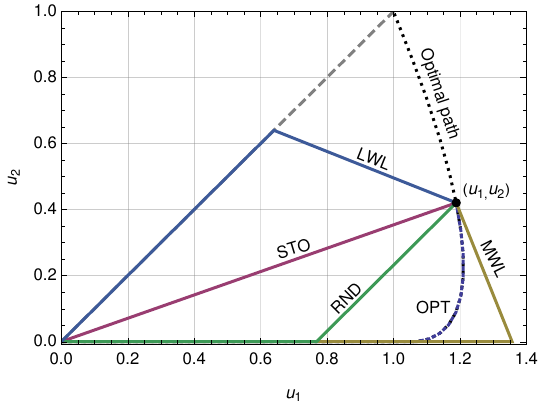} &
    \includegraphics[width=40mm]{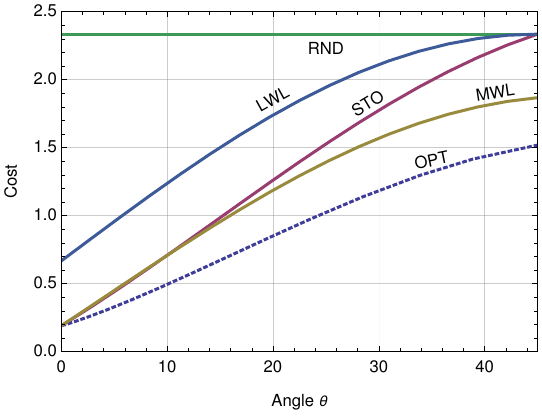}  
    %
    %
    \\
    (a)\; Paths for different policies &
    (b)\; Cost at unit distance 
    \end{tabular}
    \caption{Heuristic and optimal paths and costs
    with exponential job sizes and $\boldsymbol{\rho}\mathbf{=0.7}$.
    }
    \label{fig:fluid-ex1}
\end{figure}

Next, we compare some heuristic routing policies with the optimal dispatching policy for fluid flow with exponential job sizes and $\rho=0.7$. 

Figure~\ref{fig:fluid-ex1}(a) depicts the paths for different
heuristics and the optimal policy.
The OPT path unbalances the workloads
strongly, but not as much as MWL, which does it maximally.
All of the chosen heuristics and the optimal policy have the scaling property, 
so the costs for the whole $(u_1,u_2)$-plane can be characterized by following an arc at unit distance.
The resulting costs (value functions) are shown in
Figure~\ref{fig:fluid-ex1}(b)
as a function of the angle $\theta$ for points at unit distance, $||\bu||=1$. The costs for all policies are increasing
in $\theta$ (decreasing in the imbalance). When the fluid system starts with a large imbalance (small $\theta$), RND and LWL have significantly larger costs than STO or MWL.

Let us next consider heuristics for the original stochastic dispatching system, using the value function of the fluid model
as an approximation for the value function of the original system, and again assuming
exponential job sizes. 

We propose two heuristics that use the optimal fluid policy when backlogs are large, but adjusts to avoid idling when the short queue length is small.
The F-BLB (Fluid with a buffer lower bound) policy uses the fluid optimal policy when the backlog of the short queue is greater than some threshold, or buffer lower bound, $u_2>u_B$, and uses the LWL policy otherwise.
Numerically we observe that $u_B=3$ gives good results. The \textsc{F-blbh} policy uses the fluid optimal policy when $u_2>u_B$ and the arriving job size exceeds a threshold $h_S$. Jobs of size less than $h_S$ are always sent to the short queue, and LWL is followed if $u_2 \le u_B$.
In the numerical examples, we use $(u_B,h_S)=(2,1.5)$.

We also consider two sequential heuristics, \textsc{Dice} \cite{hyytia-itnac-2022} and CARD \cite{xie-2024}. \textsc{Dice} routes a job of size $x$ to the queue with the least backlog if $u_2+x<\tau$, where we use $\tau=6$, i.e., the virtual buffer
can accommodate six average-sized jobs. CARD uses two job-size thresholds to define ``small,'' ``medium,'' and ``large'' jobs, and it always routes small jobs to the short queue and large jobs to the long queue, and it routes medium jobs to the short (long) queue if the short queue length is below (above) some threshold.
\textsc{Dice} generally tends to perform better than the sequential CARD heuristic \cite{xie-2024}, while requiring a single tuning parameter that works well for all loads, rather than CARD's three parameters that all need to be tuned for each load.

Finally, we consider the Short-to-Short and Long-to-Long (\textsc{SSLL}) heuristic. 
Jobs shorter than a threshold $h_S$ are always routed to the shorter queue, and the rest go to the longer queue.
We use the load balancing threshold, $h\approx 1.678$
(that is specific to job-size distribution). This is also known as (static) SITA with switch, which is an improvement on SITA, where short jobs are always sent to a given fixed queue, regardless of the relative backlogs.


\newcommand{\policy}[1]{\item {\textsc{#1}:} }

\begin{figure}[t]
    \centering
    \includegraphics[width=64mm]{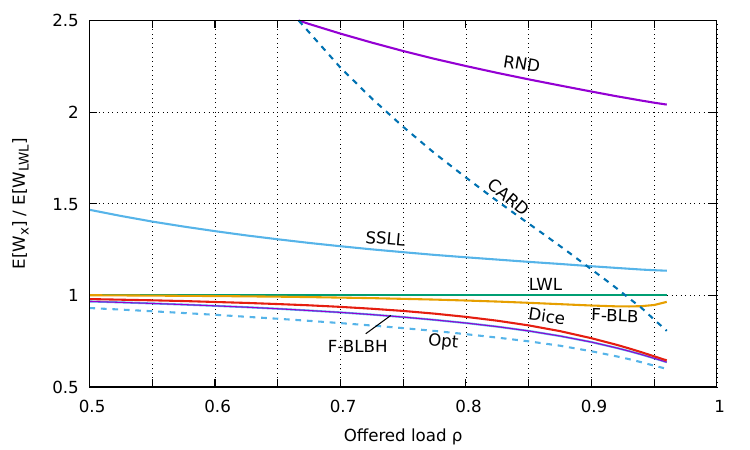}
    \caption{
    Performance with different heuristic policies 
    relative to \textsc{LWL}
    as a function of $\boldsymbol{\rho}$.}
    \label{fig:simu}
\end{figure}

Figure~\ref{fig:simu} shows the numerical results as a function of $\rho$ for the heuristics and the optimal policy.
The $y$-axis is scaled by 
the waiting time with \textsc{LWL}, as a base-line policy,
i.e.\ the relative performance metric 
for policy $\pi$ is the ratio,
$$
\frac{ \E{W_\pi} }{\E{W_{\textsc{LWL}}} }.
$$
The numerical results are based on average waiting
time observed during relatively long
simulation runs of 100-200 million arrivals.
We can observe that
\textsc{F-blb} does a better job than \textsc{LWL}
as load increases (\textsc{LWL} is optimal when
the system is lightly loaded). 
CARD, while provably optimal as the load goes to 1, does not perform well at lower loads.
On the other hand, the rather simple \textsc{Dice} and
\textsc{F-blbh} heuristics
have near-optimal performance across all loads. We also note that if the initial loads are balanced, LWL is the individually optimal policy for strategic jobs that choose a queue upon arrival, so we can view the curve for the optimal policy in Figure 4 as a measure of the ``cost of anarchy.''

\section{Future Work}
Many of our fluid results hold for two servers. We would like to extend our results and heuristics to more than two servers. One possibility for a heurisitc approach, along the lines of the CARD heuristic, is to carefully control the fluid to the short queue, while dividing the remaining fluid to balance the loads of the other queues.

Heuristics based on the fluid model work well when there are a large number of jobs in all queues. We expect that they will work well for a model in which there are occasional large bursts of traffic, and queues have finite buffers, so that overflows are forced into shorter queues. We plan to explore the impact on the optimal fluid paths when queues have finite buffers, as well as the effect of bursty arrivals.

\section{Appendix}
\subsection{Proof of Proposition \ref{prop:cx}}
Consider two job-size
distributions, $X\sim F_{X}$ and $Y\sim F_{Y}$, with similar subscripts for $T$, $g$, and $g^{-1}$. Because $F$, $g$, $g^{-1}$, and $z=z(y')$ are non-decreasing, we have
\begin{eqnarray*}
&&g_{X}(h) \leq g_{Y}(h)\quad \forall h\ge 0 
\\
&\Rightarrow& g_{X}^{-1}(z)\geq
g_{Y}^{-1}(z) \quad \forall z \in [0,\rho]
\\
&\Rightarrow& \Phi _{X}(y^{\prime })
\geq
\Phi _{Y}(y^{\prime }) \quad \forall y^{\prime } \in [-\rho/(1-\rho),\rho/(1-\rho)]
\\
&\Rightarrow& T_{X}\leq T_{Y}\text{.}
\end{eqnarray*}%
Also,
\begin{eqnarray*}
g_{X}(h) &=& \rho \int_{0}^{h}xf_{X}(x)dx= \rho E[X|X\leq h]P\{X\leq h\} \\
&=&\rho[1-E[X|X>h]P\{X>h\}] \\
&=&\rho[1-E[(X-h)^{+}]].
\end{eqnarray*}%
Because $\rho[1-(x-h)^{+}]$ is convex in $x$, the result follows. 

\subsection{Proof of Proposition \ref{prop:size-unaware}}
Here our admissible dispatching policies are the fraction of jobs, $p$, to be routed to server 2, depending on the backlogs, $(u_1,u_2)$, and by work conserving, we assume that when $u_2 =0$, the fraction of jobs routed to server 2 will be 
$p=1/(2\lambda)=1/(2\rho)$
(the maximal proportion to incur 0 costs, by making the load = 1/2).
Given the fraction of jobs routed to server 2 is $p$,
  $$
  y' = \frac{(2p-1)\rho}{1-\rho}
  \qquad
  \Rightarrow
  \qquad
  p = \frac{(1-\rho)y'+\rho}{2\rho},
  $$
  and the cost rate is given by
  $$
  \Cr = 2 \rho (p u_2 + (1-p)u_1)
  = 2\rho \left(\frac{x+y}{
  {2}} - 
  p y \right)
  = \rho x - (1-\rho)y' y
  $$
  which is similar to our earlier cost rate, with $p$ replacing $\Phi$. The total cost rate is
  
 $$
 v_{det} (x_1,y_1)=\frac{1 }{1-\rho }\left[ \rho x_{1}^{2}/2-\int_{0}^{x_{1}}(1-\rho)y'y\,dx\right].
$$
Thus, ignoring constants and $x_1^2$, which does not depend on the path, we obtain the following equivalent variational problem for the optimal path:
  $$
  \Tf = \int_0^{x_1} y' y\,dx = \max.
  $$
  Integrating $\Tf$ by parts gives
  $$
  \int_{0}^{x_1} y' y\,dx = \int_{0}^{y_1} y \,dy = \frac{y_1^2}{2},
  $$
  i.e.\ {\em any} allowed (work-conserving) path is optimal.

\bibliographystyle{IEEEtran}
\bibliography{references}

\end{document}